\newtheorem{claimnum}{Claim}
\title{Edge-b-coloring Trees
\thanks{Partially supported by CNPq/Brazil.}
}
\author{Victor Campos \and Ana Silva}
\institute{ParGO Group - Parallellism, Graphs and Optimization\\Universidade Federal do Cear\'a, Fortaleza, Brazil\\\email{campos@lia.ufc.br,\ anasilva@mat.ufc.br}}
\begin{document}

\maketitle

\begin{abstract}
A b-coloring of the vertices of a graph is a proper coloring where each color class contains a vertex which is adjacent to at least one vertex in each other color class. The b-chromatic number of $G$ is the maximum integer $b(G)$ for which $G$ has a b-coloring with $b(G)$ colors. This problem was introduced by Irving and Manlove in 1999, where they showed that computing $b(G)$ is $\mathcal{NP}$-hard in general and polynomial-time solvable for trees. Since then, a number of complexity results were shown, including NP-hardness results for chordal graphs (Havet et. al., 2011) and line graphs (Campos et. al., 2015). In this article, we present a polynomial time algorithm that solves the problem restricted to claw-free block graphs, an important subclass of chordal graphs and line graphs. This is equivalent to solving the edge coloring version of the problem restricted to trees.
\end{abstract}

\section{Introduction}

Let $G$ be a simple graph\footnote{The graph terminology used in this paper follows \cite{BM08}.} 
and let $\psi$ be a proper coloring of
$G$. If a vertex $u$ in a color class $i$ of $\psi$ has no neighbors in another color class $j$, then another proper coloring of $G$ can be obtained by simply changing the color of $u$ to $j$. Therefore, if this holds for every vertex of color class $i$, then we can obtain a proper coloring that uses fewer colors.  Because finding the chromatic number of a graph is $\mathcal{NP}$-hard, one cannot expect to always get a good result by iteratively applying this method.

On the basis of this idea, Irving and Manlove introduced the notion of
b-coloring \cite{Irving.Manlove.99}.  Intuitively, a b-coloring
is a proper coloring that cannot be improved by the described heuristic,
and the b-chromatic number $b(G)$ measures the worst possible such coloring. Finding 
$b(G)$ was proved to be $\mathcal{NP}$-hard in general graphs \cite{Irving.Manlove.99}, 
and remains so even when restricted to bipartite graphs \cite{KRATOCHVIL.etal.02}, chordal
 graphs \cite{HLS.11}, or line graphs \cite{Campos.etal.15}.  In the same article, Irving and Manlove also introduced a simple upper bound for $b(G)$, defined as follows. The \emph{$m$-degree of $G$} is the maximum integer $k$ for which there are at least 
$k$ edges of degree at least $k - 1$; we denote it by $m(G)$. 
Because the existence of a b-coloring with $k$ colors forces the existence of $k$ vertices of degree $k-1$, we get: $$\chi(G)\le b(G) \leq m(G)$$

They then proved that $b(G)\ge m(G)-1$ whenever $G$ is a tree. Since then, it has been discovered that the graph needs only to be ``locally acyclic'' in order to have this property, in other words, it is known that $b(G)\ge m(G)-1$ whenever $G$ has girth at least~7 \cite{Campos.Lima.Silva.15}. A number of other results investigate graphs with high b-chromatic number when compared to $m(G)$; for instance \cite{BMZ.09}, \cite{Cabello.Jakovac.11}, \cite{KM.02}, and \cite{KRATOCHVIL.etal.02}, among others. A natural question is whether this property carries on to the edge version of the problem, that is, whether the b-chromatic number of the line graph $G$ of a tree is also at least $m(G)-1$. The answer to this question is ``yes'', if $G$ is the line graph of a caterpillar \cite{Campos.etal.15}, but it is ``no'' for general trees, as can be seen in \cite{MS.12}. Nevertheless, here we show that deciding whether the b-chromatic number of $G$ is at least $k$ can be done in polynomial time, when $G$ is the line graph of a tree. 
This, together with the result on caterpillars, are the only positive results on a subclass of chordal graphs, up to our knowledge. We believe that our algorithm can be adapted to work on any block graph. However, the generalization to bigger subclasses of chordal graphs, for instance interval graphs, poses much harder difficulties.

Our algorithm was inspired on a previous fixed-parameter algorithm for block graphs presented in~\cite{Silva.10} (we mention that the fixed-parameter decision problem is open for general graphs).
Consider $W\subseteq D_k(G)$ with cardinality $k$, and, for each $u\in W$, consider a subset $N_u$ of size $k-1$. Let $G'$ be the graph obtained from $G$ by turning each subset $N_u$ into a clique; also, let $\psi$ be the precoloring of $G'$ where each $u\in W$ is colored with a distinct color, and no other vertex is colored. Note that if $\psi$ can be extended to the whole graph $G'$, then the same extension is a b-coloring of $G$ having $W$ as basis. In~\cite{Marx.07}, Marx proves that if $G'$ is a chordal graph, then deciding whether the precoloring can be extended to $G'$ can be done in polynomial time. Sampaio and Silva observed that if $G$ is a block graph, then the obtained graph $G'$ is chordal. They presented their result separatedly in~\cite{HLS.11} and \cite{Silva.10}, and in \cite{Silva.10} it is observed that this leads to a fixed-parameter-algorithm for block graphs: it suffices to test whether the extension exists, for every possible subset $W$, and for every family of subsets $\{N_u\mid u\in W\}$.

Here, we consider line graphs of trees (which equals claw-free block graphs) and use a flow network representation of a b-coloring of $G$ (this is similar to the one used by Marx in~\cite{Marx.07}); then, we use a dynamic programming algorithm to eliminate the need for an exponential number of tests. We believe that our algorithm can be made to work on general block graphs, and maybe even on larger classes, provided that the auxiliary graph constructed as in the previous paragraph is still chordal. However, we do not believe that this class is much larger; for instance, if $u\in W$ has a $P_4$, ($u_1,u_2,u_3,u_4$), in its neighborhood and $N_u\cap\{u_1,u_2,u_3,u_4\} = \{u_1,u_4\}$, then already we get a $C_4$ in the auxiliary graph. 

Before we start, we need some formal definitions. Let $G$ be a graph. A $k$-coloring of $G$ is a function $\psi:V(G)\rightarrow\{1,\cdots,k\}$, and it is said to be \emph{proper} if $\psi(u)\neq \psi(v)$, whenever $uv\in E(G)$. Consider $\psi$ to be a proper $k$-coloring of $G$. The \emph{color class $i$ (of $\psi$)} is the subset of vertices $\{u\in V(G)\mid \psi(u) = i\}$. We say that vertex $u\in V(G)$ \emph{realizes color $i$} (or that \emph{color $i$ is realized by $u$}) if $\psi(u)=i$ and $u$ is adjacent to at least one vertex of color class $j$, for every $j\in \{1,\cdots,k\}\setminus\{i\}$. We say that a subset \emph{$W$ realizes distinct colors} if each vertex of $W$ realizes a color, and the set of realized colors equals $\lvert W\rvert$. If $\psi$ is such that every color is realized by some vertex, then we say that $\psi$ is a \emph{b-coloring of $G$}. Also, a subset containing exactly one vertex that realizes $i$, for each color class $i$, is called a \emph{basis of $\psi$}. A vertex $v\in V(G)$ is said to be \emph{$k$-dense} if $d(v)\ge k-1$, and the set of all $k$-dense vertices of $G$ is denoted by $D_k(G)$. Clearly, if $\psi$ is a b-coloring of $G$ with $k$ colors with basis $W$, then $W\subseteq D_k(G)$. The \emph{b-chromatic number of $G$}, denoted by $b(G)$, is the maximum integer $k$ for which $G$ has a b-coloring with $k$ colors. 

A graph is called a \emph{block graph}  if its 2-connected components are cliques, and it is \emph{claw-free} if it does not have an induced subgraph isomorphic to $K_{1,3}$ (the complete bipartite graph with parts of size 1 and 3, also known as ``claw''). It is well known, and not hard to verify, that the class of claw-free block graphs equals the class of line graphs of trees. In the remainder of the text, we refer only to claw-free block graphs, and consider the graph to be rooted at some block. 

Here, we want to decide, given a claw-free block graph $G$ and an integer $k$, whether $b(G)\ge k$. Let $\omega(G)$ denote the size of a largest clique in $G$. Clearly, the answer is always ``no'' when $k<\omega(G)$, and, because $G$ is a chordal graph (which means that $\omega(G)=\chi(G)\le b(G)$), the answer is ``yes'' when $k= \omega(G)$. Thus, from now on we suppose that $k>\omega(G)$. 
To solve the problem, we compute the maximum size of a subset $W$ that realizes distinct colors on a $k$-coloring of $G$. Clearly, we will get $b(G)\ge k$ if and only if the maximum size of such a set is $k$. The idea is then to compute this value for smaller subgraphs of $G$ and then try to combine these solutions. For this, we need a more convenient way to represent a solution. In the next section, given a subset $W$ that can realize distinct colors in a $k$-coloring, we show another representation of such a $k$-coloring related to $W$. Then, in the following section, we show how to combine the partial solutions. 

\section{Representation of a $k$-coloring that realizes $\lvert W\rvert$ colors}

Before we start, we need some further definitions. Consider a claw-free block graph $G$ rooted at some block $B$, an integer $k>\omega(G)$, and a subset $W\subseteq D_k(G)$. For any block $B'$ different from the root, denote by $P(B')$ the \emph{parent block of $B'$}, and by $c(B')$ the cut vertex that separates $B'$ from $P(B')$. Finally, denote by $G_{B'}$ the subgraph rooted at $B'$. 

The \emph{flow network related to $(G,W)$} is denoted by $\mathcal{F}(G,W)$ and is obtained as follows (denote by $c$ the capacity function). Consider a block $B'=\{x_1,\cdots,x_q\}$ of $G$. Add nodes $\{B'_{x_1},\cdots,B'_{x_q},(B')\}$ to $\mathcal{F}(G,W)$, and set the capacity of each node to one, except for $(B')$ which has capacity $k-q$. Denote this set of nodes by $V(B')$ and call node $(B')$ the \emph{cash-node (of $B'$)}. Suppose, without loss of generality, that $x_1,\cdots,x_p$, $p\le q$, are all the cut vertices in $B'$ different from $c(B')$. For each $i\in\{1,\cdots,p\}$, let $B^i$ be the block containing $x_i$ different from $(B')$ (i.e., $P(B^i)=B'$ and $x_i = c(B^i)$). 
\begin{itemize}
 \item[(I)] If $x_i\not\in W$ (observe Figure \ref{xnotinW}) - add node $(B^i,B')$ with capacity $k-1$, and the following arcs:
 \begin{enumerate}
  \item $(B^i_x,(B^i,B'))$, for all $x\in B^i\setminus\{x_i\}$; 
  \item $((B^i,B'),B'_x)$, for all $x\in B'\setminus\{x_i\}$; 
  \item $(B^i_{x_i},B'_{x_i})$, $((B^i),(B^i,B'))$, and $((B^i,B'),(B'))$.
 \end{enumerate}

 \begin{figure}\label{xnotinW}
\begin{center}
\includegraphics[height=3cm]{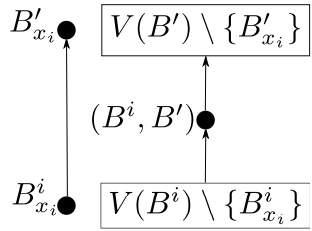} 
\end{center}
\caption{Flow network representation when $x_i\notin W$. The arcs represent the existence of every arc between the corresponding subsets of nodes.}
\end{figure}

 \item[(II)] If $x_i\in W$ (observe Figure \ref{xinW}) - add node $(B^i,B')$ with capacity $d_G(x_i)-k+1$, a source node $\langle x_i\rangle$ that sends out 1 unit of flow, and the following arcs:
 \begin{enumerate}
  \item $(B^i_x,(B^i,B'))$ and $(B^i_x,(B'))$, for all $x\in B^i\setminus \{x_i\}$;
  \item $((B^i),B'_x)$ and $((B^i,B'),B'_x)$, for all $x\in B'\setminus \{x_i\}$; and
  \item $(B^i_{x_i},B'_{x_i})$ and $(\langle x_i\rangle, B^i_{x_i})$.
 \end{enumerate}
 
 \begin{figure}\label{xinW}
\begin{center}
\includegraphics[height=3.5cm]{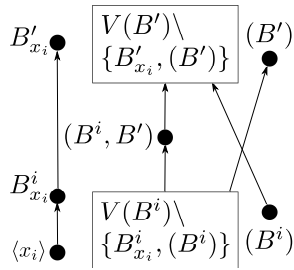}
\end{center}
\caption{Flow network representation when $x_i\in W$. The arcs represent the existence of every arc between the corresponding subsets of nodes.}
\end{figure}

\item[] Nodes of type $(B^i,B')$ are also called \emph{cash-nodes}.
\end{itemize}

Finally, we set $V(B)$ to be the sinks. The network will admit a flow if each flow-unit coming from a source can be rooted to the sinks. More formally, write $W=\{w_1,\cdots,w_{\kappa}\}$, $\kappa\le k$, and let ${\cal N}$ denote the set of non-source nodes of $\mathcal{F}(G,W)$. A \emph{flow in $\mathcal{F}(G,W)$} is a collection of directed paths $\{p_1,\cdots,p_\kappa\}$ of $\mathcal{F}(G,W)$ such that: $p_i$ starts at $\langle w_i\rangle$ and ends at $V(B)$, for every $i\in \{1,\cdots,\kappa\}$; and $\lvert \{i\mid \gamma\in p_i\}\rvert\le c(\gamma)$, for every $\gamma\in {\cal N}$. We denote the value $\lvert \{i\mid \gamma\in p_i\}\rvert$ by $f(\gamma)$.
We want to prove that there exists a $k$-coloring where $W$ realizes distinct colors if and only if $\mathcal{F}(G,W)$ admits a flow.

Intuitively, one can see the flow paths as colors that are forced to vertices because of the precoloring. When no implications exist, we can ``store flow'' in the cash-nodes, which means that no vertex in that block must necessarily have that color. That is why the flow path containing node $B^i_{x_i}$ above must also contain node $B'_{x_i}$, i.e., the color of a node cannot change from one block to the other. Also, if $x_i\in W$, then the colors that do not necessarily appear in block $B^i$ must appear in block $B'$ in order for $x_i$ to realize its color. Note that some of the colors that appear in $B^i$ are allowed to appear in $B'$ but up to a limit set by the capacity of node $(B^i,B')$, which depends on the degree of $x_i$. This is because if too many colors are repeated in the neighborhood of $x_i$ then it may be impossible for $x_i$ to realize its color. In what follows, we give a formal definition of these ideas.

Let $f = \{p_1,\cdots,p_\kappa\}$ be a flow in $\mathcal{F}(G,W)$. For each vertex $x\in V(G)$, let ${\cal N}(x)$ represent the set of nodes of $\mathcal{F}(G,W)$ related to $x$, i.e., $B_x\in {\cal N}(x)$, for every block $B$ containing $x$, and $\langle x\rangle\in {\cal N}(x)$ if $x\in W$. Let $\psi$ be a $k$-coloring of $G$. We say that $x\in V(G)$ is \emph{saturated by $p_i$ (in $f$)} if ${\cal N}(x)\cap p_i\neq \emptyset$, and that $x$ is \emph{flow colored} (in $\psi$ according to $f$) if $x$ is saturated by $p_{\psi(x)}$. If $\psi$ is a proper $k$-coloring such that every saturated vertex is flow colored, then we say that $\psi$ is a \emph{flow coloring of $f$}.



\begin{theorem}\label{thm:flow}
Let $W\subseteq D_k(G)$, and $F=\mathcal{F}(G,W)$. If $F$ has a flow $f$, then there exists a flow coloring $\psi$ of $G$ where each $w_i\in W$ realizes a distinct color. Conversely, if $\psi$ is a coloring of $G$ such that each $w_i\in W$ realizes a distinct color, then $\psi$ is a flow coloring of $f$, for some flow $f$ in $F$.
\end{theorem}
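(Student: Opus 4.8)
The plan is to prove both implications by exploiting the block-tree structure of $G$ together with the structural fact that, since $G$ is claw-free (a line graph of a tree), every vertex lies in \emph{at most two} blocks; in particular each cut vertex $x_i$ lies exactly in its child block $B^i$ and its parent block $B'=P(B^i)$. Throughout I would read the network through the following dictionary: a path $p_i$ carries the color realized by $w_i$; a vertex-node $B'_x$ lying on $p_i$ records that $x$ receives that color in $B'$; the cash-node $(B')$ lying on $p_i$ records that the color is \emph{absent} from $B'$; and the consistency arc $(B^i_{x_i},B'_{x_i})$ says exactly that a cut vertex has the same color in its two blocks. With this reading I would induct on the block tree, processing blocks from the root downward, and treat the two directions separately.

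For the converse (a realizing coloring yields a flow) I would build the paths directly from $\psi$. Index so that $w_i$ realizes color $c_i$; these are distinct by hypothesis. For each $i$ let $p_i$ start at $\langle w_i\rangle$ and follow $c_i$ up the block tree: in each block on the way to the root, route through the vertex-node of the (unique, by properness) $c_i$-colored vertex if $c_i$ occurs there, and through the block's cash-node otherwise, using the case~(I)/(II) arcs to pass between consecutive blocks, and terminate at a node of $V(B)$. That this is a flow reduces to the capacities. A vertex-node $B'_x$ is used by at most one $p_i$ since two colors cannot both equal $\psi(x)$; a cash-node $(B')$ is used by at most $k-|B'|$ paths since only $k-|B'|$ colors are absent from the clique $B'$. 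The decisive point is the case~(II) node $(B^i,B')$: the paths through it are exactly the colors occurring \emph{both} among the neighbors of $w_i$ in $B^i$ and among those in $B'$, so $f\big((B^i,B')\big)$ equals the cross-block overlap of the colors in $N(w_i)$; since $w_i$ realizes $c_i$, all $k-1$ foreign colors appear in $N(w_i)$, forcing this overlap to be at most $d_G(w_i)-(k-1)$, which is precisely the capacity.

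For the forward implication I would reverse this. Given a flow $f$, assign its $\kappa\le k$ paths pairwise-distinct colors $c_1,\dots,c_\kappa$ and, for each saturated vertex $x$, set $\psi(x)$ to the color of the unique path meeting $\mathcal{N}(x)$. This is well defined: each path meets at most one vertex-node per block (it enters one such node and leaves it toward the parent), capacity one forbids two paths at a common node, and the arcs $(B^i_{x_i},B'_{x_i})$ force a cut vertex to get the same color in both of its blocks. I then extend $\psi$ to the unsaturated vertices by handling blocks from the root: when $B^i$ is treated, the color of its parent cut vertex $c(B^i)$ is already fixed, and I give each other vertex of $B^i$ a color distinct from those present in $B^i$. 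This is always possible because $|B^i|\le\omega(G)<k$ and, since each vertex lies in at most two blocks and is colored while treating its block nearer the root, it is constrained by only one block at the moment it is colored; hence properness (which in a block graph means distinctness inside each clique) is preserved. Realization of each $w_i$ then follows from the same overlap count: the flow through $(B^i,B')$ bounds the cross-block overlap in $N(w_i)$ by $d_G(w_i)-k+1$, leaving at least $k-1$ distinct colors among its neighbors, and distinctness of the $c_i$ makes the realized colors distinct.

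The heart of the argument — and the step I expect to be hardest — is the exact match between the case~(II) capacity and realization: one must verify that $f\big((B^i,B')\big)$ genuinely counts the colors shared by the two neighbor-cliques of $w_i$, and that the bound $d_G(w_i)-k+1$ is precisely the overlap slack under which all $k-1$ foreign colors still fit into $N(w_i)$. The second delicate point is \emph{gluing} the per-block data into a single global object in both directions: that cut vertices stay consistent across their two blocks (via the arcs $(B^i_{x_i},B'_{x_i})$ and the capacity-one vertex-nodes), that the cash capacities $k-|B'|$ exactly account for the colors a block may omit, and that the chosen colors never clash. The induction on the block tree, run from the root, is what turns these local checks into the two global statements of the theorem.
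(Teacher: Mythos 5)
Your converse direction is sound and is essentially the paper's argument: route each color's path up the block tree through the vertex-node of the vertex carrying that color when the color is present in a block and through the cash-node when it is absent, and verify the capacity of $(B^i,B')$ by observing that the paths through it count exactly the colors repeated across the two blocks meeting at $w_i$, which realization bounds by $d_G(w_i)-k+1$.

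The forward direction, however, has a genuine gap at precisely the step you flag as the heart of the argument. After fixing the colors of the saturated vertices from the paths, you extend to the unsaturated vertices greedily, giving each ``a color distinct from those present in its block,'' and then claim realization of $w_i$ follows because the flow through $(B^i,B')$ bounds the overlap of colors between the two blocks at $w_i$. But the flow only accounts for repetitions among \emph{flow-colored} vertices: your greedy rule keeps each block rainbow, yet does nothing to prevent an unsaturated vertex of $B'$ from receiving a color that already sits on a vertex of $B^i$ (or vice versa), which is an extra repetition in $N(w_i)$ not charged to $(B^i,B')$. Since $d_G(w_i)$ may equal exactly $k-1+f((B^i,B'))$, a single such uncharged repetition leaves some color absent from $N(w_i)$ and kills realization. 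Symmetrically, the colors that have no path at all --- the colors $\kappa+1,\dots,k$ and the path-colors of vertices of $W$ lying in other subtrees --- are not forced anywhere by the flow, yet all of them must be placed in $N(w_i)$; your greedy rule makes no attempt to do this. The counting argument you invoke shows only that there is \emph{room} for all $k-1$ foreign colors in $N(w_i)$, not that your coloring actually puts them there. This is exactly what the paper's proof spends most of its effort on: it works bottom-up by induction on the block tree and, after merging the subtree colorings, performs explicit recolorings --- Claim~\ref{claim1} to recolor unsaturated cut vertices, and a final exchange argument (with the sets $M_i$, $C_i$, $D_i$: every color missing from $N(x_i)$ lies outside the subtree-colors, the capacity of $(B^i,B)$ forces some non-subtree color to be repeated in $N(x_i)$, and the two can then be swapped inside $G_i$) --- to push every missing color into $N(x_i)$ while preserving properness and flow-coloredness. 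To repair your proof you would need to add such an exchange step, and carrying it out consistently for all $w_i$ simultaneously is most naturally organized as the paper's bottom-up induction rather than a single top-down pass.
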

\begin{proof}
The proof is by induction on the height of $G$. If $G = B$, because $k>\omega(G)$, we get $D_k(G) = \emptyset$ and the theorem is vacuously true. So, let $X=\{x_1,\cdots,x_p\}$ be the cut vertices of $B$, and, for every $i\in \{1,\cdots,p\}$, let $B^i$ be the block containing $x_i$ other than $B$, $G_i$ be subgraph of $G$ rooted at $B^i$, $W_i$ be the set $W\cap V(G_i)$, and $F_i$ be the subnetwork of $F$ related to $G_i$. 

  First, consider a flow $f$ in $F$ and, for each $i\in \{1,\cdots, p\}$, let $f_i$ be the flow $f$ restricted to $F_i$. By induction hypothesis, there exists a flow coloring $\psi_i$ of $G_i$ such that each $w\in W_i$ realizes a distinct color. We can suppose that the colors realized in $\psi_i,\psi_j$ are distinct whenever $i\neq j$, as otherwise it suffices to rename some colors. Let $\psi$ be the coloring (not necessarily proper) obtained by the union $\bigcup_{i=1}^p\psi_i$ (it is well defined since $V(G_i)\cap V(G_j)=\emptyset$ whenever $i\neq j$). In order to obtain a flow coloring, we need to ensure that: 1) each saturated $x\in B$ is flow colored; 2) the coloring is proper; and 3) that each $x\in W\cap B$ realizes a distinct color.
 
First, consider a saturated vertex $x\in B$. By the capacity of nodes in ${\cal N}(x)$ and the construction of $F$, one can verify that $x$ is saturated by at most one path, say $p_\ell$. We want to ensure that $x$ is colored with $\ell$. If $x$ is not a cut vertex, then it is not colored in $\psi$ and we can just color $x$ with $\ell$; so, suppose that $x=x_i$, for some $x_i\in X$. If $x_i\in W$, note that the path $(\langle x_i\rangle, B^i_{x_i})$ must be in $f_i$ and, by induction hypothesis, $x_i$ is flow colored.
So, consider that $x_i\notin W$, and suppose that $\psi(x_i)=j\neq \ell$. By induction hypothesis, we know that $x_i$ is not saturated in $f_i$, i.e., $B^i_{x_i}$ is not contained in $p_\ell$. Also, because $(B^i,B)$ is not adjacent to $B_{x_i}$, we know that $p_\ell$ also does not contain $(B^i,B)$, which implies that $w_\ell\notin W_i$. The following claim ensures us that we can change $\psi_i$ to color $x_i$ with $\ell$.

\begin{claimnum}\label{claim1}
 Let $x_i$ be nonsaturated in $f_i$ with $\psi_i(x_i)=j$, and let $\ell\in\{1,\cdots,k\}\setminus \{j\}$ such that $w_\ell\notin W_i$. Then, we can change the color of $x_i$ from $j$ to $\ell$ in $\psi_i$.
\end{claimnum}
\begin{proof}
Note that if $w_j\notin W_i$, we can simply switch colors $j$ and $\ell$ in $G_i$; so suppose otherwise. Because $\psi_i$ is a flow coloring of $G_i$ with respect to $f_i$, which means that it is proper and that every saturated vertex is flow colored, we know that path $p_j$ contains cash-node $(B^i)$; hence, $w_j\notin B^i$. Let $x$ be the cut vertex that separates $x_i$ from $w_j$, and recall that $\psi_i(x)\neq j$, since $\psi_i$ is proper. Denote by $G_x$ the graph rooted at $x$ and by $G'_x$ the graph $G_i-G_x$. If $x\notin W$ and $\psi(x)\neq\ell$, then we can switch colors $j$ and $\ell$ in $G'_x$. So, we analyse the following cases:

\begin{enumerate}
  \item $x\in W$: Let $B'$ be the block containing $x$ different from $B^i$. Because $(B')$ is not adjacent to $(B^i)$, we get that path $p_j$ must contain $B'_y$, for some $y\in B'\setminus\{x\}$, which means that color $j$ is repeated in $N(x)$. Thus, we can just switch colors $j$ and $\ell$ in $G'_x$;
	
  \item $\psi(x)=\ell$: Note that this implies $x\notin W$ since $w_\ell\notin W_i$. Let $W_x=W\cap V(G_x)$. If there exists some color $\ell'$ such that $\ell'\notin \psi(B^i)$ and $w_{\ell'}\notin W_x$, then we switch colors $\ell$ and $\ell'$ in $G_x$ and colors $j$ and $\ell$ in $G'_x$. Thus, suppose otherwise. This means that every color appears in $\psi(B^i)$ or in $\psi(W_x)$, i.e., \[\lvert \psi(B^i \cup W_x)\rvert = k\Rightarrow \lvert W_x\rvert = k-\lvert B^i\rvert +\lvert \psi(B^i)\cap \psi(W_x)\rvert\]
  Because the capacity of $(B^i)$ is $k-\lvert B^i\rvert$, we get that at least $\lvert \psi(B^i)\cap \psi(W_x)\rvert$ paths starting at $W_x$ must pass through nodes in $V(B^i)\setminus (B^i)$ . Also, because $\psi_i$ is a flow coloring, at most $\lvert \psi(B^i)\cap \psi(W_x)\rvert$ nodes in $V(B^i)\setminus (B^i)$ are contained in paths starting at $W_x$. This means that every node in $V(B^i)\setminus(B^i)$ that receives color from $\psi(W_x)$ is flow colored, a contradiction since $\psi(x_i) = j \in \psi(W_x)$ and $x_i$ is not flow colored.
\end{enumerate}

\end{proof}

Now, we want to ensure that $\psi$ is a proper coloring. For this, consider $x_i$ to be an unsaturated vertex colored with some color that also appears in $\psi(B\setminus\{x_i\})$. Note that, if there is any conflict, such a vertex must exist, since every saturated vertex is flow colored. We show that  there exists a color $\ell\in \{\kappa+1,\cdots,k\}\setminus \psi(B)$, and apply Claim \ref{claim1} to change the color of $x_i$ to $\ell$ in $\psi_i$, thus eliminating the conflict. So suppose otherwise, i.e., that $\{\kappa+1,\cdots,k\}\subseteq \psi(B)$. Also, let $C$ denote the set $\{j\in\{1,\cdots,\kappa\}\mid (B)\in p_j\}$, and observe that $\{1,\cdots,\kappa\}\setminus \psi(B)\subseteq C$. Thus, every color appears in $\psi(B)$ or in $C$, i.e., $\lvert \psi(B)\cup C\rvert \ge k$. However, by the capacity of $(B)$ and the fact that $x$ has a conflicting color, we get that $\lvert \psi(B)\cup C\rvert \le \lvert \psi(B)\rvert + \lvert C\rvert\le \lvert B\rvert -1 + k -\lvert B\rvert$, a contradiction. Note that a similar argument can be applied if $x$ has a color of $C$, or if $x$ is not colored. Because of this, we can suppose that every unsaturated vertex of $B$ is colored with a color not in $\{1,\cdots,\kappa\}$



Finally, we need to prove that we can realize a disctinct color in each vertex of $W\cap B$. Because $k>\omega(G)$ and $W\subseteq D_k(G)$, we know that $W\cap B\subseteq X$. Consider $x_i\in W\cap B$, and denote by $M_i$ the set of colors that do not appear in $N(x_i)$, and by $C_i$ the set of colors $\psi_i(W_i)$. We change $\psi_i$ in order to decrease the number of colors in $M_i$. First, note that, for every $j\in C_i$, either $p_j$ intersects $V(B^i)\setminus \{(B^i)\}$ or it contains $(B^i)$ and, hence, must intersect $V(B)\setminus \{(B)\}$. Therefore, since we already know that every vertex is flow colored, we get that $M_i\cap C_i=\emptyset$. Now, consider any $\ell\in M_i$, and let $D_i$ be the set of colors of $C_i$ that are repeated in $N(x_i)$, i.e., $D_i = \{j\in C_i\mid j\in \psi(B^i)\cap \psi(B)\}$. By the capacity of node $(B^i,B)$, we know that $d(x_i)\ge k-1+\lvert D_i\rvert$. Therefore, some color $\ell'$ not in $C_i$ must be repeated in the neighborhood of $x_i$. Since $\ell$ is also not in $C_i$, we can just switch colors $\ell$ and $\ell'$ in $\psi_i$.

Now, we prove the second part of the theorem. Let $\psi$ be a coloring of $G$ where each $w_i\in W$ realizes a distinct color and let $\psi_i$ be the coloring $\psi$ restricted to $G_i$, for each $x_i\in X$. By induction hypothesis, $\psi_i$ is the flow coloring of a flow in $F_i$; because $F_i$, $i\in \{1,\cdots,p\}$, are pairwise disjoint, we make an abuse of language and denote by $p_j$ the path starting in $\langle w_j\rangle$ and ending in $\bigcup_{i=1}^p V(B^i)$. We want to extend these paths into a flow in $F$ for which $\psi$ is a flow coloring. First, we just extend them without paying atention to the nodes' capacities. Consider any $j\in\{1,\cdots,\kappa\}$ and let $\gamma$ be equal to $B_x$, if there exists $x\in B$ such that $\psi(x)=j$; otherwise, let $\gamma$ be equal to $(B)$. We want to ensure that path $p_j$ contains $\gamma$. Let $x_i\in X$ be such that $w_j\in W_i$. If $\gamma=B_{x_i}$, by induction hypothesis, we know that $p_j$ contains $B^i_{x_i}$ and we just add $\gamma$ to $p_j$; hence, suppose $\gamma\neq B_{x_i}$. Because $\psi$ is a proper coloring and by induction hypothesis, we know that $\psi(x_i)\neq j$, and: either (1) $p_j$ contains $B^i_{x'}$, for some $x'\in B^i\setminus\{x_i\}$; or (2) $p_j$ contains $(B^i)$. If (1) occurs and either $\gamma\neq (B)$ or $x_i\notin W$, or if (2) occurs and $x_i\notin W$, then add $(B^i,B)$ and $\gamma$ to $p_j$. If (1) occurs, $\gamma=(B)$, and $x_i\in W$, then add $\gamma$ to $p_j$. Finally, if (2) occurs and $x_i\in W$, because $x_i$ realizes its color and $j\notin \psi(B^i)$, we must have that $\gamma\neq (B)$; hence, we can just add $\gamma$ to $p_j$.

It remains to prove that the capacities are respected. Because $W_i\cap W_j=\emptyset$, whenever $i\neq j$, we get that $f(B_x)\le 1$, for all $x\in B$. So, it remains to prove that the capacities of the following cash-nodes are respected:
\begin{itemize}
 \item cash-node $(B^i,B)$, when $x_i\notin W$: its capacity is $k-1$ and it is violated only if all $\kappa=k$ and no flow path passes through $B^i_{x_i}$. However, $x_i$ is colored with some color, which contradicts the fact that $\psi_i$ is a flow coloring;
 
 \item cash-node $(B^i,B)$, when $x_i\in W$: in this case, its capacity is $d(x_i)-k+1$. Because $(B^i,B)$ is not adjacent to $(B)$ nor to $(B^i)$, we know that each path passing through $(B^i,B)$ must come from a node $B^i_{x'}$ and end at a node $B_x$. Since $x,x'$ are flow colored, this means that each path passing through $(B^i,B)$ defines a color that is being repeated in $N(x_i)$. Since $x_i$ realizes a color in $\psi$, we know that at most $d(x_i)-k+1$ colors are repeated in its neighborhood, i.e., that $f(B^i,B)\le c(B^i,B)$;
 
 \item cash-node $(B)$: each flow path ending at $(B)$ defines a color in $\{1,\cdots,\kappa\}$ that is not used in $B$. Because $B$ is a clique, there are at most $k-\lvert B\rvert$ such colors.
\end{itemize}
\end{proof}

\section{Finding the maximum $\lvert W\vert$}\label{sec:dynalg}

Now, we want to find tha maximum size of a subset $W\subseteq D_k(G)$ that realizes distinct colors. We solve the problem for each subgraph rooted at some block, starting by the leaf nodes and going up towards the root $B$. Recall that $k>\omega(G)$, which implies that the answer is always zero on the leaf nodes. For simplicity, suppose we are at the root $B=(x_0,\cdots,x_q)$, where $x_0$ connects $B$ to its parent block, if it exists, and $x_1,\cdots,x_p$ are all the cut vertices in $B$, for some $p\in\{1,\cdots,q\}$. We denote by $\mathcal{F}(B)$ the family of flow networks $\{\mathcal{F}(G,W)\mid W\subseteq D_k(G)\}$. We make an abuse of language and say that there exists a flow $f$ in $\mathcal{F}(B)$ if $f$ is a flow in $\mathcal{F}(G,W)$, for some $W\subseteq D_k(G)$, and the \emph{value of $f$} is given by $\lvert W\rvert$. Also, for each $i\in\{1,\cdots,p\}$, we denote by $B^i$ the block containing $x_i$ other than $B$. Now, let $b\in\{0,1\}$ and $j\in \{0,\cdots,k-q\}$. We say that a flow $f\in \mathcal{F}(B)$ \emph{realizes $(b,j)$} if $f(B_{x_0})=b$, and $f((B))=j$, and we define:

\begin{itemize}
\item[*] $S_B(b,j)$: maximum value $i$ for which there exists a flow $f$ in $\mathcal{F}(B)$ of value $b+j+i$ that realizes $(b,j)$.
\end{itemize}

We want to compute table $S_B$ using tables $S_{B^1},\cdots,S_{B^p}$. Note that no vertex in $B$ can define a source in these subsolutions since they have degree at most $\omega(G)<k$ in the related subgraphs. Therefore, we need to investigate the possibility of adding a subset of $\{\langle x_1\rangle, \cdots,\langle x_p\rangle\}$ as new sources. However, there is an exponential number of subsets to investigate. Because of this, we need some auxiliary tables. For each $i\in \{1,\cdots,p\}$, let $H_i$ be the graph which corresponds to the component of $G - \bigcup_{j=i+1}^p E(B^j)$ containing $B$, and let $\mathcal{F}_i$ be the family of flow networks $\{\mathcal{F}(H_i,W)\mid W\subseteq D_k(H_i)\}$. 
Now, consider $\ell'\in\{1,\cdots,p\}$, $b\in\{0,1\}$, $j_1\in\{0,\cdots,\ell'\}$, and $j_2\in\{0,\cdots,k-q\}$.  We say that a flow $f\in {\cal F}_{\ell'}$ \emph{realizes $(b,j_1,j_2)$} if $f(B_{x_0})=b$, $f(\{B_{x_1},\cdots,B_{x_{\ell'}}\})=j_1$, and $f((B)) = j_2$. Then, we define:

\begin{itemize}
\item[*] $P_{\ell'}(b,j_1,j_2)$: maximum value $i$ for which there is a flow in $\mathcal{F}_{\ell'}$ of value $b+j_1+j_2+i$ that realizes $(b,j_1,j_2)$.
\end{itemize}
 
Because $H_p$ equals $G$, we get:
\[S_B(b,j)=\max_{0\le j_1\le p}\{P_p(b,j_1,j)+j_1\}\]

In what follows, sometimes we implicitly assume that a flow strictely contained in some other exists. That holds because of the following proposition (it suffices to ignore one of the paths).

\begin{proposition}\label{prop:1lessflow}
If there exists a flow $f$ in $\mathcal{F}(H,W)$, where $H$ is rooted at $R$, then there exists a flow $f'$ in $\mathcal{F}(H,W\setminus\{w\})$, $\forall w\in W$.
\end{proposition}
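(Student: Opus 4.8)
The plan is to start from a flow $f=\{p_1,\dots,p_\kappa\}$ in $\mathcal{F}(H,W)$, delete the single path issuing from the source $\langle w\rangle$, and then massage the surviving $\kappa-1$ paths into a legal flow for $\mathcal{F}(H,W\setminus\{w\})$. The one point to keep in mind is that removing $w$ from $W$ does not merely delete a path: the two networks genuinely differ, but they differ \emph{only} locally. Indeed, since $w\in D_k(H)$ and $\omega(H)\le\omega(G)<k$, the vertex $w$ cannot have all its neighbours inside a single block, so it is a cut vertex; and because $H$ is claw-free, $w$ lies in exactly two blocks, a child block $B^i$ with $w=c(B^i)$ and its parent $B'=P(B^i)$. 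Passing from $W$ to $W\setminus\{w\}$ switches this one junction from construction (II) to construction (I): it removes the source $\langle w\rangle$ together with the bypass arcs $(B^i_x,(B'))$ and $((B^i),B'_x)$, it raises the capacity of the cash-node $(B^i,B')$ from $d_H(w)-k+1$ to $k-1$, and it adds the arcs $((B^i),(B^i,B'))$ and $((B^i,B'),(B'))$. Every other junction is left untouched, so the surviving paths are already legal away from $B^i/B'$.

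First I would locate the deleted path. The source $\langle w\rangle$ has its only out-arc to $B^i_w$, whose only out-arc is to $B'_w$, and $B^i_w$ has capacity $1$; hence $p_w$ begins $\langle w\rangle,\,B^i_w,\,B'_w,\dots$, is the unique path through $B^i_w$, and visits no other node of $\mathcal{F}(G_{B^i},\cdot)$. Deleting $p_w$ therefore leaves the interior of $G_{B^i}$ entirely undisturbed and frees $B^i_w$ and $B'_w$. It remains to repair the paths that can use a now-deleted arc, namely those crossing the $B^i/B'$ junction; these are exactly the $p_j$ with $w_j\in W\cap V(G_{B^i})$, since the arcs point towards the root and a path crosses this junction precisely when its source lies in $G_{B^i}$ while its sink lies in $V(R)$. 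For each such surviving path I reroute it through the cash-node $(B^i,B')$: a segment $B^i_x\to(B')$ is replaced by $B^i_x\to(B^i,B')\to(B')$, and a segment $(B^i)\to B'_x$ is replaced by $(B^i)\to(B^i,B')\to B'_x$, while segments already passing through $(B^i,B')$ are left alone. All arcs invoked by these replacements are present in construction (I).

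Finally I would check the capacities, which is the only place where real work is needed. The rerouting changes which paths visit no node other than $(B^i,B')$: a path that exited through $(B')$ still ends at $(B')$, one that entered $B'_x$ still enters $B'_x$, and $(B^i)$ keeps the same paths, so every capacity away from the junction cash-node is inherited from $f$ (and deleting $p_w$ only frees capacity). The delicate point is $(B^i,B')$ itself, whose capacity is now $k-1$ and through which every surviving crossing path has been funnelled. The number of crossing paths is $\lvert W\cap V(G_{B^i})\rvert$, one of which was $p_w$, so at most $\lvert W\cap V(G_{B^i})\rvert-1$ paths traverse $(B^i,B')$ afterwards. To close the argument I would bound $\lvert W\cap V(G_{B^i})\rvert\le\kappa\le k$: every path of $f$ ends in the sink set $V(R)$, whose node capacities sum to $k$, so a flow can contain at most $k$ paths. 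Hence at most $k-1$ paths use $(B^i,B')$, its capacity is respected, and the rerouted collection is a flow in $\mathcal{F}(H,W\setminus\{w\})$. The main obstacle is thus not the deletion but correctly accounting for this capacity change at the junction; everything else reduces to the observation that the two networks coincide away from $w$.
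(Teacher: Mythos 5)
Your proof is correct, and it is considerably more careful than what the paper offers: the paper's entire justification for Proposition~\ref{prop:1lessflow} is the parenthetical remark ``it suffices to ignore one of the paths.'' The core idea is the same --- delete the path issuing from $\langle w\rangle$ --- but you correctly observe a subtlety the paper glosses over, namely that $\mathcal{F}(H,W)$ and $\mathcal{F}(H,W\setminus\{w\})$ are different networks: the junction of $w$'s two blocks switches from construction (II) to construction (I), so the surviving paths that crossed that junction via the bypass arcs $(B^i_x,(B'))$ or $((B^i),B'_x)$ no longer traverse legal arcs and must be rerouted through $(B^i,B')$. Your rerouting uses only arcs present in construction (I), leaves every load unchanged except at $(B^i,B')$, and your count --- at most $\lvert W\cap V(G_{B^i})\rvert-1\le k-1$ crossing paths against the new capacity $k-1$ --- closes the one capacity check that actually needs an argument. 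So your write-up supplies a genuine missing verification rather than merely expanding the paper's; nothing in it is wrong or superfluous.
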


Now, considering that we know tables $P_{\ell'}$ and $S_{B^{\ell'+1}}$, we want to compute $P_{\ell'+1}(b,j_1,j_2)$, where $b\in\{0,1\}$, $j_1\in\{0,\cdots,\ell'+1\}$, and $j_2\in\{0,\cdots,k-q\}$. For this, we need to analyse what types of solutions in $S_{B^{\ell'+1}}$ and in $P_{\ell'}$ can be combined. Consider $f'\in F_{\ell'}$ realizing entry $e'=(b',j'_1,j'_2)$ of $P_{\ell'}$, where $b'\le b$, and $f''\in \mathcal{F}(B^{\ell'+1})$ realizing entry $e''=(b'',j'')$ of $S_{B^{\ell'+1}}$. Let the values of $f',f''$ be $b'+j'_1+j'_2+w'$ and $b''+j''+w''$, respectively. We want to construct a flow $f\in \mathcal{F}_{\ell'+1}$ that realizes $(b,j_1,j_2)$. 
Let $W'$ and $W''$ be such that $f'$ is a flow in ${\cal F}(H_{\ell'},W')$ and $f''$ is a flow in ${\cal F}(G_{\ell'+1},W'')$, where $G_{\ell'+1}$ is the subgraph rooted at $B^{\ell'+1}$. Intuitively, what we do is applying Proposition \ref{prop:1lessflow} to accomodate $f'\cup f''$ into a flow in ${\cal F}(H_{\ell'+1}, W)$, for some $W\subseteq W'\cup W''$. We also try to increase the combined flow's value by adding $x_{\ell'+1}$ to $W$. Below, by ``push flow into node $\gamma$'', we mean that we increase the corresponding path to a path with extremity in $\gamma$. Observe that every path in $f'$ already has an extremity in $V(B)$; therefore, we try to push flow $f''$ into $V(B)$ taking into consideration $f'$ and the values $b,j_1,j_2$.

First, we try to push the flow $f''(V(B^{\ell'+1}))$ into $V(B)$ without trying to add vertex $x_{\ell'+1}$ as a new source. In this case, all the flow not in $B^{\ell'+1}_{x_{\ell'+1}}$ can be pushed into whatever unsaturated node in $V(B)\setminus\{B_{x_{\ell'+1}}\}$ by passing through $(B^{\ell'+1},B)$. So, if either $j'_1+b''>j_1$ or $j'_2>j_2$, we know that the produced flow will not realize $(b,j_1,j_2)$. This is also the case when one of the following situations occurs. If the amount of flow in $V(B^{\ell'+1})\setminus B^{\ell'+1}_{x_{\ell'+1}}$ is not sufficient to satisfy the following lacks: of $(b-b')$ in $B_{x_0}$; of $j_1-j'_1-b''$ in $\{B_{x_1},\cdots,B_{x_{\ell'}}\}$; and of $j_2-j'_2$ in $(B)$. Or if $j_1=\ell'+1$, $b''=0$, and the amount of flow $w'$ in $V(B)\setminus\{(B),B_{x_1},\cdots,B_{x_{\ell'}}\}$ is not sufficient to satisfy the lack in $B_{x_{\ell'+1}}$. We then say that $f',f''$ are \emph{weakly $(b,j_1,j_2)$-compatible} if 

\begin{itemize}
  \item[W1] $d_1 = j_1 - j'_1 -b'' \ge 0$; 
  \item[W2] $d_2 = j_2 - j'_2\ge 0$; 
  \item[W3] $w''+j''\ge d_1+d_2+(b-b')$, and $w'\ge d'_1 = \min\{0,j_1 - (\ell'+b'')\}$. 
\end{itemize}


If $f'$ and $f''$ are weakly $(b,j_1,j_2)$-compatible, their \emph{combined value}, denoted by $v(f',f'')$, is the amount of flow that can be sent to $V(B)\setminus\{B_{x_0},\cdots,B_{x_{\ell'+1}},(B)\}$. By applying Proposition \ref{prop:1lessflow}, this either equals the number of nodes, or the sum $w'+w''+j''$ minus the quantity of flow sent to satisfy the lacks. That is: $v(f',f') = \min\{q-(\ell'+1), w'+w''+j'' - ((b-b') + d_1 +  d'_1 + d_2)\}$. 

Now, if we want to turn $x_{\ell'+1}$ into a source, we have to pick entries in $S_{B^{\ell'+1}}$ of type $(0,j'')$ and push the flow $j''$ into $V(B)\setminus(B)$, and the flow $S_{B^{\ell'+1}}(0,j'')$ into $(B)$. Suppose we can push $r_1$ units of flow through $(B^{\ell'+1},B)$. By similar arguments, we need: the amount of flow lacking in $\{B_{x_0},\cdots,B_{x_{\ell'}}\}$ to be nonnegative and to be satisfiable by the amount of flow in $(B^{\ell'+1})$ plus $r_1$; and the amount of flow lacking in $(B)$ to be non-negative and to be satisfiable by the amount of flow in $V(B^{\ell'+1})\setminus\{(B^{\ell'+1})\}$ minus $r_1$. For this, we define $f',f''$ to be \emph{strongly $(b,j_1,j_2)$-compatible} if $b''=0$ and there exists $r_1$ such that:

\begin{itemize}
  \item[S1] $0\le r_1\le d_G(x_{\ell'+1}) - k + 1$;
  \item[S2] $0\le d_1 = j_1 - (j'_1 + 1 + (b-b')) \le j'' + r_1$; and 
  \item[S3] $0\le d_2 = j_2 - j'_2  \le w'' - r_1$. 
\end{itemize}

Again, if $f'$ and $f''$ are strongly $(b,j_1,j_2)$-compatible, we want their \emph{combined value}, $v(f',f'')$, to be the amount of flow that can be sent to $V(B)\setminus\{B_{x_0},\cdots,B_{x_{\ell'+1}},(B)\}$. Because $w''-r_1$ units of flow are necessarily sent to $(B)$, we know that this is either $q-(\ell'+1)$, or $w'+j''+r_1$ minus the flow used to satisfy the lack. This gives us that: $v(f',f'') = \min\{q - \ell' -1,w'+j'' + r_1 + r_2 - (b-b') - d_1\}$, where $r_2 = \min\{w''-r_1-d_2, d_G(x_{\ell'+1})-k+1-r_1\}$ (minimum between the amount of flow remaining in $V(B^{\ell'+1})\setminus\{(B^{\ell'+1})\}$ and the amount of flow that we can still push through $(B^{\ell'+1},B)$). If $f',f''$ are either weakly or strongly $(b,j_1,j_2)$-compatible, we say that they are \emph{$(b,j_1,j_2)$-compatible} (or just ``compatible'' if there is no ambiguity). Finally, we prove that these definitions completely describe our solution set.

\begin{lemma}
There exists a flow $f\in \mathcal{F}_{\ell'+1}$ that realizes $(b,j_1,j_2)$ of value $w+b+j_1+j_2$ if and only if there are flows $f'\in \mathcal{F}_{\ell'}$ and $f''\in \mathcal{F}(B^{\ell'+1})$ such that $f',f''$ are $(b,j_1,j_2)$-compatible and $v(f',f'')=w$.
\end{lemma}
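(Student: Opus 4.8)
The plan is to prove both directions by tracking, unit by unit, how flow crosses the single interface between the subtree $G_{\ell'+1}$ and the rest of $H_{\ell'+1}$. The decisive structural fact is that $H_{\ell'+1}$ is obtained from $H_{\ell'}$ by attaching $G_{\ell'+1}$ (rooted at $B^{\ell'+1}$) at the cut vertex $x_{\ell'+1}$, so in $\mathcal{F}_{\ell'+1}$ the only arcs leaving the $G_{\ell'+1}$-side toward $V(B)$ are the direct arc $B^{\ell'+1}_{x_{\ell'+1}}\to B_{x_{\ell'+1}}$ and the arcs through the cash-node $(B^{\ell'+1},B)$. Hence every flow path either stays on one side or crosses exactly once at this interface. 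I would record that the direct arc carries \emph{locked} flow (it can only reach $B_{x_{\ell'+1}}$, and so contributes to $j_1$), whereas flow routed through $(B^{\ell'+1},B)$ may reach any free node of $B$ \emph{except} $B_{x_{\ell'+1}}$; I would also note the two capacities of $(B^{\ell'+1},B)$, namely $k-1$ in case (I) and $d_G(x_{\ell'+1})-k+1$ in case (II), as the quantities behind the weak and strong conditions respectively.

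For the forward direction, given a flow $f$ in $\mathcal{F}_{\ell'+1}$ realizing $(b,j_1,j_2)$ of value $w+b+j_1+j_2$, I would let $f'$ be the restriction of $f$ to $\mathcal{F}(H_{\ell'},W\cap V(H_{\ell'}))$ and $f''$ the restriction to $\mathcal{F}(G_{\ell'+1},W\cap V(G_{\ell'+1}))$, truncating at $V(B^{\ell'+1})$ every path of $f$ that crosses the interface. Reading the entries $(b',j_1',j_2')$ and $(b'',j'')$ off these restrictions, I would verify the compatibility inequalities by a conservation count at the interface: the locked direct-arc flow equals $b''$ and lands in $B_{x_{\ell'+1}}$, while the flow through $(B^{\ell'+1},B)$ supplies exactly the lacks $b-b'$ at $B_{x_0}$, $d_1=j_1-j_1'-b''$ at $B_{x_1},\ldots,B_{x_{\ell'}}$, and $d_2=j_2-j_2'$ at $(B)$, together with $w$ further units landing at free nodes. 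Whether $x_{\ell'+1}\in W$ decides case (II)/strong (where the crossing flow through $(B^{\ell'+1},B)$ is bounded by $d_G(x_{\ell'+1})-k+1$, yielding S1--S3 with the appropriate $r_1$) or case (I)/weak (where the bound $k-1$ is slack, yielding W1--W3); in either case the free-node count gives $v(f',f'')=w$.

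For the converse, starting from $(b,j_1,j_2)$-compatible flows $f'$ and $f''$ with $v(f',f'')=w$, I would keep $f'$ as is (its paths already end in $V(B)$) and extend the paths of $f''$ across the interface, routing exactly $b''$ units along the direct arc to $B_{x_{\ell'+1}}$ and distributing the routable flow $j''+w''$ through $(B^{\ell'+1},B)$ so as to fill the lacks $b-b'$, $d_1$, $d_2$, and then $v(f',f'')$ free nodes; in the strong case I would first create the source $\langle x_{\ell'+1}\rangle$ and send its unit as prescribed by $r_1$. The inequalities W1--W3 (resp.\ S1--S3) are precisely what guarantees that these demands are met simultaneously without exceeding node capacities, and whenever the available flow overshoots a target I would discard surplus paths using Proposition \ref{prop:1lessflow}. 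A final capacity check --- each $B_x$ receives at most one unit, $(B)$ at most $k-q$, and $(B^{\ell'+1},B)$ within its case-dependent bound --- confirms that the assembled $f$ is a genuine flow in $\mathcal{F}_{\ell'+1}$ realizing $(b,j_1,j_2)$ of value $w+b+j_1+j_2$.

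The main obstacle will be the special status of $B_{x_{\ell'+1}}$: it is the one free node of $B$ that $(B^{\ell'+1},B)$ cannot reach, so when $j_1=\ell'+1$ and $b''=0$ its unit of flow must be supplied by free flow already present in $f'$ rather than pushed across --- exactly the side condition $w'\ge d_1'$ in W3, which is easy to overlook. Keeping the locked direct-arc flow, the routable flow through $(B^{\ell'+1},B)$, and the pre-existing free flow of $f'$ carefully separated, and matching each to the correct term of $v(f',f'')$, is where the bookkeeping must be done most delicately.
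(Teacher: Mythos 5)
Your proposal is correct and follows essentially the same route as the paper: restrict the flow at the interface $B^{\ell'+1}_{x_{\ell'+1}}/(B^{\ell'+1},B)$ and read off the entries for the forward direction, and for the converse extend the paths of $f''$ to fill the lacks $b-b'$, $d_1$, $d_2$ (splitting into the weak and strong cases according to whether $\langle x_{\ell'+1}\rangle$ becomes a source), discarding surplus via Proposition~\ref{prop:1lessflow}. Your identification of the side condition $w'\ge d_1'$ for $B_{x_{\ell'+1}}$ as the delicate point matches the paper's treatment exactly.
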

\begin{proof}
$\Leftarrow$: Consider entries $e'=(b',j'_1,j'_2)$ in $P_{\ell'}$ and $e''=(b'',j'')$ in $S_{B^{\ell'+1}}$ such that $f',f''$ realize $e',e''$, respectively. Let $f'$ have value $b'+j'_1+j'_2+w'$ and $f''$ have value $b''+j''+w''$. First, suppose they are weakly compatible. Observe that the flow paths in $f'$ with extremity in $x_{\ell'+1},\cdots, x_q$ can actually end in any subset of these vertices, the same being valid for the flow paths in $f''$ with extremity in $x_1\cdots,x_{\ell'}$. We construct a flow in $\mathcal{F}_{\ell'+1}$ that realizes $(b,j_1,j_2)$ of value $v(f',f'')$ as follows. 
\begin{enumerate}
  \item Push $d_1$ units of flow from $V(B^{\ell'+1}) \setminus \{B^{\ell'+1}_{v_{\ell'+1}}\}$ into $(B^{\ell'+1}, B)$, then into any subset of $d_1$ unsaturated nodes in $\{B_{x_1},\cdots,B_{x_{\ell'}}\}$. This is possible by (W1), which implies $d_1+j'_1\le \ell'$, and by (W3), which ensures that there is a sufficient amount of flow to be pushed.
  \item Push $d_2$ units of flow from $V(B^{\ell'+1}) \setminus \{B^{\ell'+1}_{x_{\ell'+1}}\}$ into $(B^{\ell'+1}, B)$, then into $(B)$. This is possible by (W2) and (W3).
  \item If $b' = b$, then we know that $B_{x_0}$ is already satisfied by $f'$. Otherwise, push $b$ units of flow from $V(B^{\ell'+1}) \setminus \{B^{\ell'+1}_{x_{\ell'+1}}\}$ into $(B^{\ell'+1}, B)$, then into $B_{x_0}$. This is possible by (W3).
  \item Push flow $f''(B^{\ell'+1}_{v_{\ell'+1}})$ directly into $B_{x_{\ell'+1}}$, if there is any. If $j_1 = \ell'+1$ and $b''=0$, in which case $d'_1=1$, we push 1 unit of flow from $\{B_{x_{\ell'+1}},\cdots,B_{x_q}\}$ into $B_{x_{\ell'+1}}$. This is possible by (W3).
  \item Push the remaining flow into $\{B_{x_{\ell'+2}},\cdots,B_{x_q}\}$, decreasing the amount if needed, i.e., if there is more flow than nodes. This can be done by Proposition \ref{prop:1lessflow}. By the previous steps, we know that the remaining amount is $v(f',f'')$. 
\end{enumerate}

Now, suppose that $f',f''$ are strongly compatible. This implies $b'' = 0$. We construct a flow in $\mathcal{F}_{\ell'+1}$ where $\langle x_{\ell'+1}\rangle$ is a source as explained below. First, note that $x_{\ell'+1}$ is $k$-dense, by $S1$.

\begin{enumerate}
  \item Create a new source $\langle x_{\ell'+1}\rangle$ and push the new unit of flow into $B^{\ell'+1}_{x_{\ell'+1}}$, then into $B_{x_{\ell'+1}}$. This is possible since $b'' = 0$.
  \item Push $d_1+(b-b')$ units of flow from $(B^{\ell'+1})$, and $r_1$ other nodes in $V(B^{\ell'+1})$, into $B_{x_0},B_{x_1},\cdots,B_{x_{\ell'}}$; this is possible by (S2).
  \item Push $d_2$ units of flow from $V(B^{\ell'+1}) \setminus \{(B^{\ell'+1}), B^{\ell'+1}_{x_{\ell'+1}}\}$ into $(B)$; this is possible by (S3);
  \item Finally, push as much flow as possible from what remains in $V(B^{\ell'+1})$ into $\{B_{x_{\ell'+2}},\cdots,B_{x_q}\}$, decreasing the amount of flow if needed. It is possible to verify that this is equal to $v(f',f'')$.
\end{enumerate}

$\Rightarrow$: Consider a flow in $\mathcal{F}_{\ell'+1}(B)$ that realizes entry $(b,j_1,j_2)$, and let $f',f''$ be $f$ restricted to $\mathcal{F}_{\ell'},\mathcal{F}(B^{\ell'+1})$, respectively. In $f'$, let $b'=f'(B_{x_0})$, $j'_1=f'(\{B_{x_1},\cdots,B_{x_{\ell'}}\})$, $j'_2=f'((B))$, and $w'=f'(\{B_{x_{\ell'+1}},\cdots,B_{x_q}\})$. Clearly, $f'$ realizes entry $e'=(b',j'_1,j'_2)$ in $P_{\ell'}$. Now, let $j''=f''((B^{\ell'+1}))$, and $w''$ be the value of $f''$ minus $j''+f''(B^{\ell'+1}_{x_{\ell'+1}})$. Finally, let $b''$ be either the amount of flow received by $B^{\ell'+1}_{x_{\ell'+1}}$, if $\langle x_{\ell'+1}\rangle$ is not a source in $f$, or 0 otherwise. Clearly, $f''$ realizes entry $e'' = (b'',j'')$ in $S_{B^{\ell'+1}}$. We need to prove that $f',f''$ are compatible and $v(f',f'')=w$. Consider two cases: 

\begin{itemize}
  \item $\langle v_{\ell'+1}\rangle$ is not a source in $f$: clearly, $j_1\ge j'_1+f(B_{x_{\ell'+1}})\ge j'_1+b''$ and $j_2\ge j'_2$, i.e., (W1) and (W2) hold. Also, by definition, we know that the flow at $\{B_{x_0},\cdots,B_{x_{\ell'}},(B)\}$ that do not come from $f'$ must come from $f''$, that is, the first part of (W3) holds. Finally, if either $j_1\le \ell'$ or $b''=1$, then $w'\ge 0 \ge j_1 - \ell' - b''$ and (W3) follows. So, suppose that $j_1=\ell'+1$, and $b''=0$. Observe that, in this case, $B_{x_{\ell'+1}}$ receives flow from some node in $\mathcal{F}_{\ell'}(B)$, which implies $w'\ge 1 = j_1 - \ell' - b''$.
  
  \item $\langle v_{\ell'+1}\rangle$ is a source in $f$: Let $r'_1$ be the amount of flow sent from $(B^{\ell'+1},B)$ to $B_{x_0},B_{x_1},\cdots,B_{x_{\ell'}}$. By definition, we know that $f(\{B_{x_0},B_{x_1},\cdots,B_{x_{\ell'+1}}\})$ and $f((B))$ is the sum of $f'$ and $f''$ on these nodes; therefore, $j_1 - (j'_1 + f''(B_{x_{\ell'+1}}) + (b-b'))\ge 0$ and $j_2-j'_2\ge 0$. Also, the flow in $\{B_{x_0},\cdots,B_{x_{\ell'+1}}\}$ not coming from $f'$ and $\langle x_{\ell'+1}\rangle$, must come from $(B^{\ell'+1})$ and $(B^{\ell'+1},B)$, i.e., $j''+r_1\ge d_1$, and $(S2)$ holds. Similarly, the flow in $(B)$ not coming from $f'$ must come from $V(B^{\ell'+1})\setminus\{B^{\ell'+1_{x_{\ell'+1}}},(B^{\ell'+1})\}$, i.e., $w''-r_1\ge d_2$ and $(S3)$ holds. Finally, (S1) holds because of the capacity of node $(B^{\ell'+1},B)$ and it is not hard to verify that $v(f',f'')$ equals $w$.
\end{itemize}
\end{proof}

Consider $f',f''$ to be $(b,j_1,j_2)$ compatible. Note that only conditions (W3) and (S3) depend on the values $w',w''$, and that, if they hold for $w',w''$, they also hold for bigger values. Also, the larger these values are, the larger is the combined value of $f'$ and $f''$. Therefore, it indeed suffices to investigate tables $P_{\ell'}$ and $S_{B^{\ell'+1}}$ in order to compute $P_{\ell'+1}$. This gives us the complexity presented in the next corollary. We realize that this complexity can be refined, however here we are more concerned about the theoretical aspect of the problem.

\begin{corollary}
Given a claw-free block graph $G$ and an integer $k>\omega$, where $\omega = \omega(G)$, it can be decided whether $b(G)\ge k$ in time $O(\omega^4k^3n)$. 
\end{corollary}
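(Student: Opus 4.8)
The plan is to treat the corollary as the wrap-up of the dynamic program: correctness comes essentially for free from the tools already built, and what remains is an honest accounting of the running time. First I would fix the semantics of the read-off step. By Theorem~\ref{thm:flow}, for the root block $B$ a flow of value $|W|$ in $\mathcal{F}(G,W)$ exists iff there is a $k$-coloring in which $W$ realizes $|W|$ distinct colors; hence the maximum size of a set realizing distinct colors equals $\max_{b,j}\{b+j+S_B(b,j)\}$ over the root table, and $b(G)\ge k$ holds iff this maximum is $k$ (it cannot exceed $k$, and equality means every color is realized, i.e.\ we have a b-coloring). The base case is the observation already made that leaf blocks have $D_k=\emptyset$, so their tables are trivial. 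The inductive correctness of the tables is exactly the preceding lemma for the step $P_{\ell'}\to P_{\ell'+1}$ together with the displayed identity $S_B(b,j)=\max_{0\le j_1\le p}\{P_p(b,j_1,j)+j_1\}$; I would invoke these rather than reprove them.

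Second I would bound the sizes of the tables and the cost of one combination. Since every block is a clique of size at most $\omega$, we have $q<\omega$ and $\ell'\le p\le q<\omega$, so $j_1$ ranges over $O(\omega)$ values, while $j_2\in\{0,\dots,k-q\}$ and $j$ range over $O(k)$ values and $b\in\{0,1\}$. Thus $S_B$ has $O(k)$ entries and each $P_{\ell'}$ has $O(\omega k)$ entries. To compute one entry $P_{\ell'+1}(b,j_1,j_2)$ I would, following the lemma, maximize the combined value $v(f',f'')$ over all candidate pairs of entries $e'\in P_{\ell'}$ and $e''\in S_{B^{\ell'+1}}$; by the monotonicity remark preceding the corollary it suffices to consult the stored optima $w',w''$, so there are $O(\omega k)\cdot O(k)=O(\omega k^2)$ pairs, each tested for weak/strong compatibility and scored in $O(\omega)$ time (checking the lacks $d_1,d_2$ and, in the strong case, the admissible split $r_1$). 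Filling all $O(\omega k)$ entries of $P_{\ell'+1}$ therefore costs $O(\omega^3 k^3)$.

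Finally I would sum over the decomposition: each block performs $p\le\omega$ combination steps and there are $O(n)$ blocks, so the total is $O(\omega^4 k^3 n)$, with the $S_B$ read-offs and the base cases dominated by this. The step I expect to be the real obstacle is not the arithmetic but justifying that consulting table entries is \emph{sound}: one must be sure that the compatibility relation and $v(f',f'')$ capture every way of choosing $W\subseteq W'\cup W''$ and rerouting flow, and that keeping only the optimal $w',w''$ loses nothing. This is precisely what the preceding lemma plus the monotonicity of (W3) and (S3) provide, so I would lean on them; the remaining care is in the per-pair scoring, where the auxiliary variable $r_1$ must be chosen (or shown irrelevant) so that the quoted $v(f',f'')$ is attained, and in checking that the crude product above is a genuine upper bound, consistent with the authors' comment that the exponents can be refined.
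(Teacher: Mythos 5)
Your proposal is correct and follows essentially the same accounting as the paper: table $P_{\ell'}$ has $O(\omega k)$ entries, $S_{B^{\ell'+1}}$ has $O(k)$ entries, each of the $O(\omega k^2)$ pairs is tested for compatibility in $O(\omega)$ time (the range of $r_1$), giving $O(\omega^3k^3)$ per combination step, $O(\omega^4k^3)$ per block over the $p\le\omega$ steps, and $O(\omega^4k^3n)$ over all blocks. The extra remarks you add on reading off the answer from $S_B$ and on the soundness of storing only the optimal $w',w''$ are left implicit in the paper (they are covered by the lemma and the monotonicity remark immediately preceding the corollary), so there is no substantive difference.
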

\begin{proof}
Each table $P_{\ell'}$ has size $2\ell' (k-\lvert B\rvert) = O(\omega k)$, and each table $S_{B^{\ell'+1}}$ has size $2(k-\lvert B\rvert)=O(k)$. Also, deciding whether entries $(b',j'_1,j'_2)$ of $P_{\ell'}$ and $(b'',j'')$ of $S_{B^{\ell'+1}}$ are $(b,j_1,j_2)$-compatible takes time $d(v_{\ell'+1}) = O(\omega)$. Therefore, computing an entry $(b,j_1,j_2)$ of $P_{\ell'+1}$ takes time $O(\omega^2 k^2)$ and, because we need to compute $O(\omega k)$ entries, for each $\ell'$ such that $1\le \ell' \le \ell\le \omega$, it takes time $O(\omega^4k^3)$ to compute $S_B$. Finally, since there are $O(n)$ blocks in $G$, the theorem follows.
\end{proof}

%

\end{document}